\newcommand{\norm}[1]{\left\lVert#1\right\rVert}
\newtheorem{prop}{Proposition}
\title{\LARGE \bf
Infinite-Dimensional Sparse Learning in Linear System Identification
}
\author{Mingzhou Yin, Mehmet Tolga Akan, Andrea Iannelli, and Roy S. Smith
\thanks{This work was supported by the Swiss National Science Foundation under Grant 200021\_178890.}
\thanks{M. Yin, A. Iannelli, and R. S. Smith are with the Automatic Control Laboratory, Swiss Federal Institute of Technology (ETH Zurich), Physikstrasse 3, 8092 Zurich, Switzerland, {\tt\small \{myin,iannelli,rsmith\}} {\tt\small @control.ee.ethz.ch}.}%
\thanks{M. T. Akan is with Control System Group, Eindhoven University of Technology, 5600 MB Eindhoven, The Netherlands, {\tt\small m.t.akan@tue.nl.}}%
}
\begin{document}

\maketitle
\thispagestyle{empty}
\pagestyle{empty}

\noindent
\fbox{\begin{minipage}{\linewidth}

M. Yin, M. T. Akan, A. Iannelli and R. S. Smith, ``Infinite-Dimensional Sparse Learning in Linear System Identification," arXiv:2203.14731.

\vspace{0.5em}

\textcopyright\ 2022 IEEE. Personal use of this material is permitted. Permission from IEEE must be obtained for all other uses, in any current or future media, including reprinting/republishing this material for advertising or promotional purposes, creating new collective works, for resale or redistribution to servers or lists, or reuse of any copyrighted component of this work in other works.
\end{minipage}}
\vspace{1em}

\begin{abstract}
Regularized methods have been widely applied to system identification problems without known model structures. This paper presents an infinite-dimensional sparse learning algorithm based on atomic norm regularization. Atomic norm regularization decomposes the transfer function into first-order atomic models and solves a group lasso problem that selects a sparse set of poles and identifies the corresponding coefficients. The difficulty in solving the problem lies in the fact that there are an infinite number of possible atomic models. This work proposes a greedy algorithm that generates new candidate atomic models maximizing the violation of the optimality conditions of the existing problem. This algorithm is able to solve the infinite-dimensional group lasso problem with high precision. The algorithm is further extended to reduce the bias and reject false positives in pole location estimation by iteratively reweighted adaptive group lasso and complementary pairs stability selection respectively. Numerical results demonstrate that the proposed algorithm performs better than benchmark parameterized and regularized methods in terms of both impulse response fitting and pole location estimation.
\end{abstract}

\section{Introduction}
System identification investigates the problem of identifying models of dynamical systems from measured input-output data. This problem has been widely studied under the parameter estimation framework, where the system is modeled by a finite-dimensional parametrization \cite{LjungBook2}. The optimal model parameters can then be estimated by tools in classical statistics. One well-known approach in this category is the prediction error method (PEM) based on maximum likelihood estimation \cite{ASTROM1980551}.

However, such approaches only work when model structure and complexity are known, and the associated optimization problems are only convex for particular noise models, e.g., ARX models \cite{Ljung_2019}. Alternative approaches have been proposed in the last decade, which identify general high-dimensional models with regularization techniques to encode prior model knowledge \cite{Ljung2_2019,Pillonetto_2016}. In particular, kernel-based identification \cite{PILLONETTO201081,PILLONETTO2014657} has received significant attention, whereby, in its basic form, a truncated impulse response model is identified with a Tikhonov regularization term. The performance of this approach depends heavily on the choice of kernels which need to be carefully designed \cite{CHEN2018109}. This kernel design step poses similar problems as model structure selection in the classical paradigm. In addition, kernel-based identification controls model complexity through the norm of the impulse response induced by an arbitrary reproducing kernel Hilbert space \cite{Chen_2012}. Such complexity measures do not have clear system theoretic interpretations.

Alternative regularization approaches have also been proposed to directly control the number of poles of the model. This measure has a more concrete meaning for system analysis and control, either when the system is known to have a low-order structure, or when a low-order representation is desired. The Hankel nuclear norm of the impulse response is used as a convex surrogate in \cite{Smith_2014,945730}. However, this regularizer is known to be prone to stability issues \cite{Pillonetto_2016}. A different approach consists of modeling the system as a summation of first-order ``atoms", which are some predefined basis models. The model complexity can then be controlled by regularizing the $l_1$-norm of the coefficients. This is known as regularizing the atomic norm with respect to the atomic decomposition \cite{6426006}. This results in a lasso-type problem that promotes models with a small number of poles \cite{Yuan_2006}. This idea has also been used in periodic system identification \cite{YIN20201237} and kernel design \cite{KHOSRAVI2020412}. Another advantage of the first-order atomic decomposition is that it directly identifies the pole locations of the system. Pole locations are important in classical control design, yet hard to estimate with conventional identification approaches.

Existing work on the atomic norm regularization approach, however, has multiple known drawbacks. First, instead of solving the group lasso problem on an infinite set of stable atoms, only a finite discretization of the atomic set is considered for tractability. This leads to an approximation error which can only be reduced with a very large set of atoms \cite{6426006}. In addition, a large bias is induced by lasso-type regularization \cite{Pillonetto_2016}, and the pole location estimation contains a possibly large number of false positives due to the ``p-value lottery" in high-dimensional regression \cite{doi:10.1198/jasa.2009.tm08647}.

In this paper, we propose an infinite-dimensional sparse learning algorithm based on atomic norm regularization, which aims to tackle the above drawbacks. This algorithm directly targets the group lasso problem with an infinite feature set, which has been studied in the machine learning literature \cite{Rakotomamonjy_2012,10.1007/978-3-540-72927-3_39,NIPS2014_459a4ddc}. Similar to Algorithm 1 in \cite{Rakotomamonjy_2012}, our proposed algorithm first solves the problem with a small number of randomly generated features. Then, 
a new atomic model feature is selected to maximize the optimality condition violation for the previous iteration. 
The algorithm guarantees a decrease in the objective value per iteration and solves the infinite-dimensional problem with an arbitrarily small tolerance.

Two different strategies are further presented to debias the estimate and reject false positives respectively. Iteratively reweighted adaptive group lasso \cite{WANG20085277,5109694} is applied to reduce the amount of regularization on significant modes of the identified model, and thus reduce the bias. Complementary pairs stability selection (CPSS) \cite{HighDimensional,https://doi.org/10.1111/j.1467-9868.2011.01034.x} solves the problem repeatedly on subsamples of the identification data and estimates the pole location by selecting atoms that are consistently active.

Numerical results demonstrate that the proposed algorithm performs better than PEM with an ARX model, kernel-based identification with tuned/correlated (TC) kernel design, and the existing atomic norm regularization algorithm in terms of impulse response fitting on a benchmark system. In addition, adaptive group lasso is able to reduce the bias of the algorithm and CPSS obtains more accurate pole location estimation than PEM with fewer false positives.

\section{Atomic Norm Regularization in System Identification}
In this work, we consider a strictly causal and stable linear time-invariant single-input single-output discrete-time system
$
    y(t)=G_0(q)u(t)+v(t)
$, 
where $u(t)$, $y(t)$, $v(t)$ are the inputs, outputs and additive noise respectively, and $q$ is the shift operator. The transfer function $G_0(q)$ is assumed to have a low number of poles. The additive noise is assumed to be zero-mean i.i.d. Gaussian with a variance of $\sigma^2$. An input-output sequence of the system
\begin{equation}
    \mathbf{u}=\left[u(1)\ u(2)\ \dots\ u(N)\right]^\top\!,\  \mathbf{y}=\left[y(1)\ y(2)\ \dots\ y(N)\right]^\top
\end{equation}
has been collected. We are interested in identifying the transfer function $G_0(q)$ from the data sequence $(\mathbf{u},\mathbf{y})$.

In regularized system identification, the transfer function $G_0(q)$ is expressed with a general high-dimensional parametrization
$
    G_0(q)=\sum_{k\in K} c_k A_k(q)
$, 
where $A_k(q)$ are the basis transfer functions known as atoms \cite{6426006}, $c_k$ are the corresponding coefficients, and $K$ denotes the set of indices. Denote the set of coefficients as $C=\left\{c_k|\,k\in K\right\}$. The following regularized optimization problem is solved:
\begin{equation}
    \underset{C}{\text{minimize}}\quad V\left(\mathbf{y}-\sum_{k\in K}c_k\, \phi\!\left(A_k(q),\mathbf{u}\right)\right) + \lambda J\left(C\right),
    \label{eqn:1}
\end{equation}
where $\phi\!\left(A(q),\mathbf{u}\right)$ denotes the length-$N$ output response of the system $A(q)$ to the inputs $\mathbf{u}$, $V(\cdot)$ is the loss function that penalizes the output residuals, $J(\cdot)$ is the regularization term that encodes prior knowledge of the coefficients, and $\lambda$ is the regularization parameter to tune the amount of regularization. For the rest of the paper, the loss function is selected as $V(x)=\norm{x}_2^2$, which is related to the maximum likelihood estimator when the noise $v(t)$ is i.i.d. Gaussian.

In this paper, the atomic decomposition of the transfer function in \cite{6426006} is employed, where
$
    A_k(q)=\dfrac{1-|k|^2}{q-k}
$, 
and the corresponding coefficients $c_k$ are complex numbers. Unlike conventional parametrizations, here $k$ is a stable pole within the open unit disk. The set of indices is thus
\begin{equation}
    K=\left\{k=\alpha\exp{\left(j\beta\right)}\,|\,\alpha\in[0,1),\beta\in[0,2\pi)\right\},
\end{equation}
which has infinite elements. The atoms $A_k(q)$ are normalized to have a Hankel nuclear norm of 1. Define the pole locations of the system as $S=\left\{k\,|\left|c_k\right|>0\right\}$, which is also known as the active atomic set. Since the system is known to have a small number of poles, a sparsity-promoting regularization term $J(C)$ is desired. In particular, an $l_1$-norm regularizer 
\begin{equation}
    J(C)=\sum_{k\in K}\left|c_k\right|
    \label{eqn:lasso}
\end{equation}
is used and defined as the atomic norm of the model \cite{10.2307/2346178}. Observe that for real-rational systems, the pole locations should be in conjugate pairs and the corresponding atomic responses are also complex conjugates of one another, i.e., $\phi\!\left(A_{\bar{k}}(q),\mathbf{u}\right)=\bar{\phi}\!\left(A_k(q),\mathbf{u}\right)$, where the overbar denotes the complex conjugate. This means that coefficients for a conjugate pole pair should also be complex conjugates, i.e., $c_{\bar{k}}=\bar{c}_k$. Adding this constraint on the coefficients of (\ref{eqn:1}), the problem can be reformulated as
\begin{equation}
    \underset{\left\{c_k\right\}_{k\in\hat{K}}}{\text{minimize}}\quad \norm{\,\mathbf{y}-\sum_{k\in\hat{K}}\left(c_k\, \phi_k+\bar{c}_k\, \bar{\phi}_k\right)}_2^2 + 2\lambda \sum_{k\in\hat{K}}\left|c_k\right|,
    \label{eqn:2}
\end{equation}
where $\phi_k:=\phi\!\left(A_k(q),\mathbf{u}\right)$ and
\begin{equation}
    \hat{K}=\left\{k=\alpha\exp{\left(j\beta\right)}\,|\,\alpha\in[0,1),\beta\in[0,\pi]\right\}
\end{equation}
denotes the upper half of the open unit disk.

Using $\Re$ and $\Im$ to denote real and imaginary parts, let 
\begin{equation}
\gamma_k=\begin{bmatrix}
\Re\!\left(c_k\right)&\Im\!\left(c_k\right)
\end{bmatrix}
^\top
,\  \zeta_k=\begin{bmatrix}2\Re\!\left(\phi_k\right)\ -2\Im\!\left(\phi_k\right)\end{bmatrix}.
\label{eqn:10}
\end{equation}
Substituting \eqref{eqn:10} into \eqref{eqn:2}, (\ref{eqn:2}) can be expressed as a real-valued problem,
\begin{equation}
    \Gamma^\star:=\left\{\gamma_k^\star\right\}_{k\in\hat{K}}=\underset{\left\{\gamma_k\right\}_{k\in\hat{K}}}{\text{argmin}}\ \underbrace{\norm{\,\mathbf{y}-\sum_{k\in\hat{K}}\zeta_k \gamma_k}_2^2 + 2\lambda \sum_{k\in\hat{K}}\norm{\gamma_k}_2}_{J(\Gamma)},
    \label{eqn:3}
\end{equation}
where $\Gamma:=\left\{\gamma_k\,|\,k\in\hat{K}\right\}$. Note that (\ref{eqn:3}) is a standard group lasso problem \cite{Yuan_2006}. The identified transfer function can be recovered by
\begin{equation}
    \hat{G}(q)=\sum_{k\in\hat{K}} \left[\,1\ \  j\,\right]\gamma_k^\star A_k(q) +  \left[\,1\ \  -j\,\right]\gamma_k^\star A_{\bar{k}}(q),
    \label{eqn:ghat}
\end{equation}
and the estimated pole locations are
\begin{equation}
    \hat{S}=\left\{k\,\left|\,\norm{\gamma^\star_k}_2>0\right.\right\}\cup\left\{\bar{k}\,\left|\,\norm{\gamma^\star_k}_2>0\right.\right\}.
    \label{eqn:phat}
\end{equation}
However, problem (\ref{eqn:3}) cannot be directly solved since it is an infinite-dimensional problem. Existing algorithms relax this problem by approximating $\hat{K}$ with a discrete grid \cite{6426006}. As shown in Proposition~4.1 of \cite{6426006}, the discretization induces a relative error in the atomic norm that is inversely proportional to the square root of the number of elements in the discretized $\hat{K}$.

\section{Algorithm for Infinite-Dimensional Atomic Norm Regularization Problems}
In this section, an algorithm is proposed to directly solve the infinite-dimensional problem (\ref{eqn:3}). This algorithm is inspired by the feature generation algorithm in \cite{Rakotomamonjy_2012}.

Problem (\ref{eqn:3}) is a non-differentiable convex program, whose optimality conditions are given by $0\in\partial J(\Gamma)$, where $\partial$ denotes the subdifferential. In detail, the optimality conditions of (\ref{eqn:3}) are
\begin{equation}
\begin{cases}
    \norm{\zeta_k^\top R}_2\leq\lambda,&\text{if }\norm{\gamma^\star_k}_2=0,\\
    \zeta_k^\top R+\lambda\gamma^\star_k/\norm{\gamma^\star_k}_2=0,&\text{if }\norm{\gamma^\star_k}_2> 0,
\end{cases}
\end{equation}
for all $k\in\hat{K}$, where $R:=\mathbf{y}-\sum_{k\in\hat{K}}\zeta_k \gamma^\star_k$ is the vector of output residuals. The derivation makes use of the property
\begin{equation}
    \partial \norm{\gamma^\star_k}_2=
\begin{cases}
    \left\{w\,|\,\norm{w}_2\leq 1\right\},&\norm{\gamma^\star_k}_2=0,\\
    \gamma^\star_k / \norm{\gamma^\star_k}_2, &\norm{\gamma^\star_k}_2> 0.
\end{cases}
\end{equation}

Let $\hat{K}_d=\left\{k_1,k_2,\dots,k_p\right\}$ be a finite subset of $\hat{K}$ with $p$ elements. Then, with an abuse of notation, by replacing $\hat{K}$ with $\hat{K}_d$ in (\ref{eqn:3}), a discretized optimal solution, denoted by $\Gamma^\star(\hat{K}_d):=\left\{\gamma^\star_i(\hat{K}_d)\right\}_{i=1}^p$, can be obtained, which satisfies
\begin{equation}
\begin{cases}
    \norm{\zeta_i(\hat{K}_d)^\top R(\hat{K}_d)}_2\leq\lambda,&\text{if }\norm{\gamma^\star_i(\hat{K}_d)}_2=0,\\
    \zeta_i(\hat{K}_d)^\top R(\hat{K}_d) +\lambda\dfrac{\gamma^\star_i(\hat{K}_d)}{\norm{\gamma^\star_i(\hat{K}_d)}_2}=0,&\text{if }\norm{\gamma^\star_i(\hat{K}_d)}_2> 0,
\end{cases}
\label{eqn:dopt}
\end{equation}
for $i=1,\dots,p$, where $R(\hat{K}_d):=\mathbf{y}-\sum_{i=1}^p\zeta_i(\hat{K}_d)\gamma^\star_i(\hat{K}_d)$ and $\zeta_i(\hat{K}_d):=\zeta_{k_i}$.

Suppose we want to add a new element $k_{p+1}$ to $\hat{K}_d$. Then the optimal solution with respect to $\hat{K}^+_d:=\hat{K}_d\cup \{k_{p+1}\}$ is
\begin{equation}
    \gamma^\star_i(\hat{K}^+_d)=
    \begin{cases}
        \gamma^\star_i(\hat{K}_d),&i=1,\dots,p,\\
        \mathbf{0},&i=p+1,
    \end{cases}
\end{equation}
iff $\norm{\zeta_{p+1}(\hat{K}^+_d)^\top R(\hat{K}_d)}_2\leq\lambda$. In other words, adding such new elements does not improve the optimal objective function value, or change the transfer function estimate $\hat{G}(q)$. So the new element only reduces the objective function value when $\norm{\zeta_{k_{p+1}}^\top R(\hat{K}_d)}_2>\lambda$. This also guarantees $k_{p+1}\notin\hat{K}_d$ since $\norm{\zeta_{k_i}^\top R(\hat{K}_d)}_2\leq\lambda$ for $i=1,\dots,p$.

Motivated by the above observation, Algorithm~\ref{al:1} is proposed to solve the infinite-dimensional group lasso problem (\ref{eqn:3}), where a greedy strategy is applied that chooses the new element by maximizing $\norm{\zeta_{k_{p+1}}^\top R(\hat{K}_d)}_2$. Note that $\hat{K}^{l}_d$ denotes the set $\hat{K}_d$ at the $l$-th iteration in Algorithm~\ref{al:1}. The transfer function and the pole location estimates $\hat{G}(q)$ and $\hat{S}$ can be calculated by (\ref{eqn:ghat}) and (\ref{eqn:phat}) respectively with discretized atomic set $\hat{K}^l_d$, which is the output of Algorithm~\ref{al:1}.

\begin{algorithm}[tb]
	\caption{A greedy algorithm for the infinite-dimensional group lasso problem (\ref{eqn:3})}
	\begin{algorithmic}[1]
	\State \textbf{Input:} identification data $(\mathbf{u},\mathbf{y})$, $\epsilon>0$, $l_\text{max}$
	\State Initialize $\hat{K}^0_d=\left\{k_1,k_2,\dots,k_{p_0}\right\}$.
	\State Calculate $\Gamma^\star(\hat{K}^0_d)$.
    \State $l\leftarrow 0$
	\Repeat
	\State Construct a candidate new atom
	\begin{equation}
	    k^+\leftarrow\underset{k\in\hat{K}}{\text{argmax}}\ \norm{\zeta_k^\top R(\hat{K}^l_d)}_2.
	    \label{eqn:4}
	\end{equation}
	\If{\norm{\zeta_{k^+}^\top R(\hat{K}^l_d)}_2\geq\lambda+\epsilon}
	\Begin
	\State $k_{p_0+l+1}\leftarrow k^+$, $\hat{K}^{l+1}_d\leftarrow\hat{K}^l_d\cup \{k_{p_0+l+1}\}$
	\State Calculate $\Gamma^\star(\hat{K}^{l+1}_d)$ via program (\ref{eqn:3}).
	\End
	\Else
	\State Break
	\State $l\leftarrow l+1$
	\Until{l\geq l_\text{max}}
	\State \textbf{Output:} $\hat{K}^{l}_d$, $\Gamma^\star(\hat{K}^{l}_d)$
	\end{algorithmic}
	\label{al:1}
\end{algorithm}

Let
\begin{equation}
    \hat{\Gamma}^\star=\left\{\gamma^\star_k\,\left|\,
        \gamma^\star_k=
    \begin{cases}
        \gamma^\star_i(\hat{K}^l_d), &k=k_i\in\hat{K}^l_d\\
        \mathbf{0},&k\in\hat{K}\setminus\hat{K}^l_d
    \end{cases}
    \right.\right\}.
\end{equation}
Algorithm~\ref{al:1} guarantees the following property.
\begin{prop}
If Algorithm~\ref{al:1} terminates without reaching the maximum number of iterations ($l<l_\text{max}$), $\hat{\Gamma}^\star$ satisfies the approximate optimality conditions
\begin{equation}
\begin{cases}
    \norm{\zeta_k^\top R}_2<\lambda+\epsilon,&\text{if }\norm{\gamma^\star_k}_2=0,\\
    \zeta_k^\top R+\lambda\gamma^\star_k/\norm{\gamma^\star_k}_2=0,&\text{if }\norm{\gamma^\star_k}_2> 0,
\end{cases}
\label{eqn:aopt}
\end{equation}
for all $k\in\hat{K}$.
\label{prop:1}
\end{prop}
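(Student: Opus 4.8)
The plan is to verify the two branches of the approximate optimality conditions \eqref{eqn:aopt} separately, by combining the exact optimality conditions \eqref{eqn:dopt} of the finite subproblem solved at the terminal iteration with the stopping rule of Algorithm~\ref{al:1}. First I would record two elementary facts. (i) Since the algorithm is assumed to halt with $l<l_\text{max}$, it must have exited through the \textbf{Break} branch, which is reached exactly when the candidate atom $k^+$ from \eqref{eqn:4} satisfies $\norm{\zeta_{k^+}^\top R(\hat{K}^l_d)}_2<\lambda+\epsilon$ at the terminal index $l$; since $k^+$ maximizes $k\mapsto\norm{\zeta_k^\top R(\hat{K}^l_d)}_2$ over all of $\hat{K}$, this yields
\[
    \norm{\zeta_k^\top R(\hat{K}^l_d)}_2<\lambda+\epsilon\quad\text{for every }k\in\hat{K}.
\]
(ii) The residual of the zero-padded iterate $\hat{\Gamma}^\star$ equals $R(\hat{K}^l_d)$, because the atoms in $\hat{K}\setminus\hat{K}^l_d$ carry $\gamma^\star_k=\mathbf{0}$ and thus contribute nothing to $\sum_{k\in\hat{K}}\zeta_k\gamma^\star_k$, so that $R=\mathbf{y}-\sum_{k\in\hat{K}}\zeta_k\gamma^\star_k=R(\hat{K}^l_d)$.

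Given these, the verification is short. Suppose first $\norm{\gamma^\star_k}_2=0$. If $k=k_i\in\hat{K}^l_d$, the first line of \eqref{eqn:dopt} together with (ii) gives $\norm{\zeta_k^\top R}_2\le\lambda<\lambda+\epsilon$; if $k\in\hat{K}\setminus\hat{K}^l_d$, fact (i) with (ii) gives $\norm{\zeta_k^\top R}_2<\lambda+\epsilon$ at once. Either way the first line of \eqref{eqn:aopt} holds. Suppose instead $\norm{\gamma^\star_k}_2>0$. By the definition of $\hat{\Gamma}^\star$ this forces $k=k_i\in\hat{K}^l_d$ with $\norm{\gamma^\star_i(\hat{K}^l_d)}_2>0$, so the second line of \eqref{eqn:dopt}, rewritten via (ii), reads $\zeta_k^\top R+\lambda\gamma^\star_k/\norm{\gamma^\star_k}_2=0$, which is exactly the second line of \eqref{eqn:aopt}.

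I do not expect a genuine obstacle: the argument is bookkeeping around the finite-dimensional KKT conditions. The two points that require a little care are (a) making explicit that halting before $l_\text{max}$ forces exit via \textbf{Break}, so the terminal violation is \emph{strictly} below $\lambda+\epsilon$, and (b) the zero-padding identity $R=R(\hat{K}^l_d)$, which is what lets the optimality conditions of the finite program be read directly as conditions over the infinite index set $\hat{K}$. The strict inequality in the first line of \eqref{eqn:aopt} then follows for free --- from $\epsilon>0$ when $k\in\hat{K}^l_d$, and from the \textbf{Break} condition when $k\notin\hat{K}^l_d$.
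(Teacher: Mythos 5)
Your proof is correct and follows essentially the same route as the paper's: establish $R=R(\hat{K}^l_d)$ via zero-padding, invoke the finite-dimensional optimality conditions \eqref{eqn:dopt} for $k\in\hat{K}^l_d$, and use the termination (Break) condition together with the argmax in \eqref{eqn:4} to bound $\norm{\zeta_k^\top R}_2<\lambda+\epsilon$ for $k\notin\hat{K}^l_d$. Your write-up is merely a more explicit version of the paper's argument, with the strict-inequality bookkeeping spelled out.
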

\begin{proof}
    Since $\gamma^\star_k=0$ for $k\notin\hat{K}^l_d$ in $\hat{\Gamma}^\star$, we have $R=R(\hat{K}^l_d)$. For $k\in\hat{K}^l_d$, the discretized optimality conditions (\ref{eqn:dopt}) guarantee the satisfaction of (\ref{eqn:aopt}). According to Algorithm~\ref{al:1}, $\norm{\zeta_k^\top R(\hat{K}^l_d)}_2=\norm{\zeta_k^\top R}_2<\lambda+\epsilon$. So for $k\notin\hat{K}^l_d$, (\ref{eqn:aopt}) is satisfied since $\norm{\gamma^\star_k}_2=0$.
\end{proof}

Proposition~\ref{prop:1} shows that the infinite-dimensional problem (\ref{eqn:3}) is approximately equivalent to the finite-dimensional problem with $(p_0+l)$ atoms
\begin{equation}
\underset{\left\{\gamma_i\right\}_{i=1}^{p_0+l}}{\text{argmin}}\ \norm{\,\mathbf{y}-\sum_{i=1}^{p_0+l}\zeta_{k_i} \gamma_i}_2^2 + 2\lambda \sum_{i=1}^{p_0+l}\norm{\gamma_i}_2.
\label{eqn:5}
\end{equation}
For the rest of the paper, define $p=p_0+l$.

The main difficulty in Algorithm~\ref{al:1} is solving the non-convex problem (\ref{eqn:4}). However, even if (\ref{eqn:4}) is not solved exactly, Algorithm~\ref{al:1} still guarantees a decrease in the objective function value at each iteration as long as $\norm{\zeta_{k^+}^\top R(\hat{K}^l_d)}_2\geq\lambda+\epsilon$ is satisfied for the candidate atom $k^+$.

\section{Debiasing and Stability Selection}
Algorithm~\ref{al:1} provides a method to solve the group lasso problem (\ref{eqn:3}). However, solutions to lasso-type regularized problems are known to have a large bias and a large number of false positives in feature selection \cite{HighDimensional}. To mitigate these problems, the following tools in high-dimensional statistics are applied to debias the estimate and reject false positives in pole location estimation from Algorithm~\ref{al:1}.

\subsection{Iteratively Reweighted Adaptive Group Lasso}

The $l_1$-norm regularizer (\ref{eqn:lasso}) is a convex relaxation of the ideal sparsity promoting function $J^*(C)=n(S)$, where $n(\cdot)$ denotes the cardinality of the set, which counts the number of poles in the model. Compared to the ideal regularizer which penalizes all the active atoms with a fixed value of 1, the $l_1$-norm regularizer penalizes them with the magnitude of the corresponding coefficients. This induces a negative bias, especially for the atoms with larger coefficients, i.e., the dominant modes. This bias is a large source of error in atomic norm regularization \cite{Pillonetto_2016}.

To reduce such bias, adaptive lasso \cite{doi:10.1198/016214506000000735} has been proposed which adds a second step that applies a reweighted version of the $l_1$-norm regularizer
\begin{equation}
    J_\text{a}(C)=\sum_{k\in K}\frac{\left|c_k\right|}{\left|c^{\star,0}_k\right|+\epsilon'},
\end{equation}
where $c^{\star,0}_k$ is the solution to the original problem, and $\epsilon'>0$ is a small constant to avoid singularity. This regularizer reduces the amount of regularization for atoms estimated with large coefficients in the original problem, and is close to $J^*(C)$ when $c_k\approx c^{\star,0}_k$. This approach is extended to apply this reweighting iteratively (Section 2.8.5 in \cite{HighDimensional}), which is sometimes known as iteratively reweighted lasso. It is pointed out in \cite{5109694} that the iteratively reweighted lasso can be interpreted as a difference of convex programming algorithm to solve the regularized problem with a non-convex log regularizer
\begin{equation}
J_\text{log}(C)=\sum_{k\in K}\frac{\log\left(\left|c_k\right|+\epsilon'\right)}{\log\epsilon'}.
\end{equation}

This iteratively reweighted adaptive approach is applied to the group lasso problem (\ref{eqn:3}) in Algorithm~\ref{al:2}. It is easy to see that the cardinality of the active atomic set $J^*(C)$ is non-increasing at each iteration.

\begin{algorithm}[tb]
	\caption{Iteratively reweighted adaptive group lasso}
	\begin{algorithmic}[1]
	\State \textbf{Input:} identification data $(\mathbf{u},\mathbf{y})$, $\epsilon'>0$, $m_s$
	\State Find
	$
	    \hat{K}^{l}_d=\left\{k_1,\dots,k_{p}\right\},\ \Gamma^\star(\hat{K}^{l}_d):=\left\{\gamma^{\star,0}_1,\dots,\gamma^{\star,0}_{p}\right\}
	$
	from Algorithm~\ref{al:1}.
	\For{m=1}{m_s}
	\Begin
	\State Find $\left\{\gamma_i^{\star,m}\right\}_{i=1}^{p}$ by solving
	\begin{equation}
	\underset{\left\{\gamma_i\right\}_{i=1}^{p}}{\text{argmin}}\ \norm{\,\mathbf{y}-\sum_{i=1}^{p}\zeta_{k_i} \gamma_i}_2^2 + 2\lambda \sum_{i=1}^{p}\frac{\norm{\gamma_i}_2}{\norm{\gamma_i^{\star,m-1}}_2+\epsilon'}.
	\end{equation}
	\End
	\State Calculate $\hat{G}(q)$ by (\ref{eqn:ghat}) with discretized atomic set $\hat{K}^l_d$ and coefficients $\left\{\gamma_i^{\star,m_s}\right\}_{i=1}^{p}$.
	\State \textbf{Output:} $\hat{G}(q)$
	\end{algorithmic}
	\label{al:2}
\end{algorithm}

\subsection{Complementary Pairs Stability Selection}

Lasso-type regularized problems are known to have favorable consistency properties in terms of prediction under mild conditions. However, in terms of estimating the active atomic set $S$, they can only guarantee that the non-active atoms are not in the true model with high probability under practical assumptions (Chapter 2 in \cite{HighDimensional}). This means that the number of false positives in the estimated pole locations is not controlled. In fact, there are usually many more estimated poles than the true poles, with many occurring at ``random'' locations depending on the noise realization. This will be shown in Section~\ref{sec:num}. This phenomenon is known as ``p-value lottery" \cite{doi:10.1198/jasa.2009.tm08647}.

Subsampling techniques have been used to increase the stability of the active atomic set estimation. In particular, the complementary pairs stability selection (CPSS) is applied in this work \cite{https://doi.org/10.1111/j.1467-9868.2011.01034.x}. This method generates complementary pairs of subsamples from the identification data, and repeats the baseline variable selection procedure (group lasso problem (\ref{eqn:3}) here) on each subsample. For our problem, this corresponds to replacing the loss function
with
\begin{equation}
    V_B(\cdot)=\norm{\,\mathbf{y}(B)-\sum_{i=1}^{p}\zeta_{k_i}(B,:) \gamma_i}_2^2,
\end{equation}
where $B\subset\left\{1,2,\dots,N\right\}$ defines a random subsample of data. Define the estimated pole locations on the subsample as $\hat{S}_B$. Then the so-called stable solution of the problem is defined as the atoms that have higher empirical probabilities of being included in $\hat{S}_B$ than a predefined threshold $\tau$. The algorithm has favorable false-positive rejection properties when $\tau>0.5$ \cite{https://doi.org/10.1111/j.1467-9868.2011.01034.x}. The method is summarized in Algorithm~\ref{al:3}. The transfer function can also be estimated by least squares on the stable solution of the atomic set.

\begin{algorithm}[tb]
	\caption{Complementary pairs stability selection}
	\begin{algorithmic}[1]
	\State \textbf{Input:} identification data $(\mathbf{u},\mathbf{y})$, $\tau\in(0.5,1]$, $n_s$
	\State Find $\hat{K}^{l}_d$ from Algorithm~\ref{al:1}.
	\For{i=1}{n_s}
    \Begin
    \State Generate a random subsample $B_i\subset\left\{1,2,\dots,N\right\}$ with $\lfloor N/2\rfloor$ elements.
    \State $\bar{B}_i \leftarrow \left\{1,2,\dots,N\right\}\setminus B_i$
    \State Calculate $\hat{S}_{B_i}$, $\hat{S}_{\bar{B}_i}$ by solving (\ref{eqn:5}) with the loss function $V(\cdot)$ replaced by $V_{B_i}(\cdot)$, $V_{\bar{B}_i}(\cdot)$ respectively.
    \End
    \State $
    \hat{S}\leftarrow\left\{k\,\left|\frac{1}{2n_s}\sum_{i=1}^{n_s}\left(\mathds{1}_{\hat{S}_{B_i}}\!(k)+\mathds{1}_{\hat{S}_{\bar{B}_i}}\!(k)\right)\geq\tau\right.\right\}
    $, where $\mathds{1}$ denotes the indicator function.
	\State \textbf{Output:} $\hat{S}$
	\end{algorithmic}
	\label{al:3}
\end{algorithm}

\section{Numerical Results}
\label{sec:num}
The performances of the proposed algorithms are assessed by numerical simulation on a benchmark fourth-order system previously analyzed in \cite{LANDAU199577}:
\begin{equation}
    G(q)=\dfrac{0.10884q+0.19513}{q^4-1.41833q^3+1.58939q^2-1.31608q+0.88642}.
\end{equation}
The system has been normalized to have an $\mathcal{H}_2$-norm of 1. In what follows, results obtained with Algorithms~\ref{al:1}, \ref{al:2}, and \ref{al:3} are labelled by \textit{InfA}, \textit{AdpInfA}, and \textit{SS} respectively.

Identification data of length $N=100$ are generated with zero-mean i.i.d. unit Gaussian inputs from a zero initial condition. Two noise levels $\sigma^2=0.1$ and 0.01 are considered. The atomic responses $\phi_k$ are also generated from a zero initial condition. 100 Monte Carlo simulations are conducted for each noise level. The initial discretized atomic set $\hat{K}_d^0$ contains $p_0=50$ randomly generated atoms with $k_i=\alpha_i\exp{\left(j\beta_i\right)}$, where $\alpha_i$ and $\beta_i$ are subject to uniform distributions in $[0,1)$ and $[0,\pi]$ respectively. Finite-dimensional group lasso problems are solved by MOSEK. The candidate atom generation problem (\ref{eqn:4}) is solved by the particle swarm solver in \textsc{Matlab}. The hyperparameter $\lambda$ is selected by cross-validation from a 15-point log-space grid between 0.05 and 5 for $\sigma^2=0.1$, and between 0.005 and 0.5 for $\sigma^2=0.01$, except for \textit{SS} where $\lambda$ is fixed to 0.5 for $\sigma^2=0.1$ and 0.05 for $\sigma^2=0.01$. The following parameters are used in simulation: $\epsilon=\epsilon'=10^{-5}$, $\tau=0.9$, $n_s=50$, $m_s=2$.

First, the number of additional atoms $l$ required in Algorithm~\ref{al:1} is plotted against the $\lambda$ values in Figure~\ref{fig:1}. The maximum $l$ in all Monte Carlo simulations is 118, which is below the $l_\text{max}$ setting. Results show that the proposed greedy atom generation approach is able to converge within a reasonable number of iterations, and the required number of additional atoms decreases with $\lambda$.

\begin{figure}[tb]
\centerline{\includegraphics[width=\columnwidth]{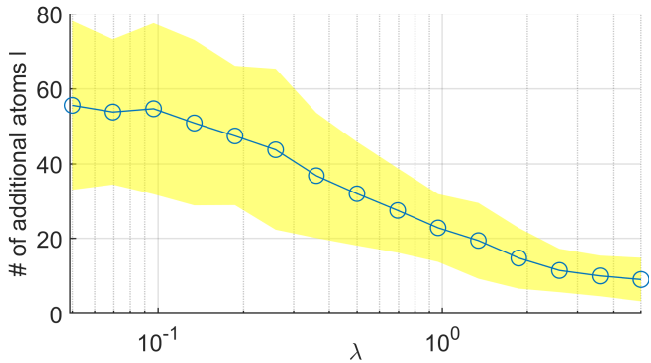}}
\caption{The number of additional atoms $l$ in Algorithm~\ref{al:1} for $\sigma^2=0.1$. Blue: mean values, yellow: ranges within one standard deviation.}
\label{fig:1}
\end{figure}

To demonstrate the performance of the proposed algorithms, they are compared to three benchmark algorithms: 1) least-squares estimation with an ARX model and a known model order (\textit{ARX}); 2) kernel-based identification with a TC kernel design (\textit{TCK}) \cite{Chen_2012}
; 3) discretized atomic norm regularization in \cite{6426006} with 50 (\textit{Atom}) and 500 (\textit{Atom2}) random atoms. Note that \textit{Atom2} uses a significantly larger atomic set compared to Algorithm~\ref{al:1}, as shown in Figure~\ref{fig:1}.

Figure~\ref{fig:2} compares the identification accuracy of all algorithms in terms of the impulse response fitting, defined as
\begin{equation}
    W=100\cdot \left(1-\left[\frac{\sum_{i=1}^{N-1}(g_i-\hat{g}_i)^2}{\sum_{i=1}^{N-1}(g_i-\bar{g})^2}\right]^{1/2}\right),
    \label{eq:W}
\end{equation}
where $g_i$ are the true impulse responses, $\hat{g}_i$ are the estimated impulse responses, and $\bar{g}$ is the mean of $g_i$. It can be seen that the three proposed algorithms all perform better than the benchmark algorithms at both noise levels. In particular, \textit{InfA} obtains better fitting compared to \textit{Atom2} which uses a much larger atomic set. This demonstrates the effectiveness of the proposed atom generation approach. \textit{AdpInfA} further improves on the identification accuracy of \textit{InfA} with iterative reweighting.

\begin{figure}[tb]
\centerline{\includegraphics[width=\columnwidth]{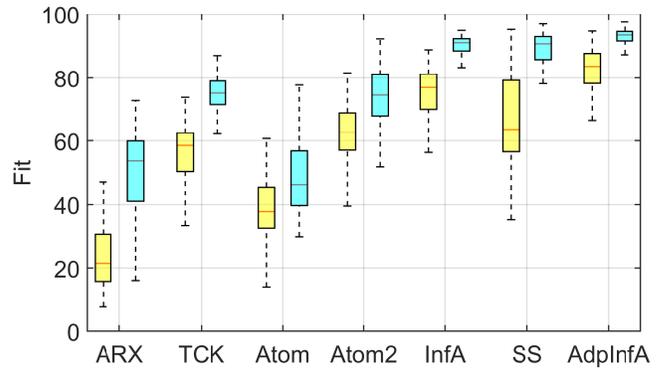}}
\caption{Boxplot of impulse response fitting. Yellow: $\sigma^2=0.1$, cyan: $\sigma^2=0.01$.}
\label{fig:2}
\end{figure}

To further investigate the sources of the estimation errors, Table~\ref{tbl:1} shows the bias-variance analysis of impulse response estimation. As an algorithm proposed to debias the estimate, \textit{AdpInfA} indeed produces a much smaller bias compared to all other algorithms. This is also the main contributor to the reduction of MSE compared to the baseline \textit{InfA} algorithm, on which \textit{AdpInfA} is based.

\begin{table}[tb]
	\renewcommand{\arraystretch}{1.2}
	\caption{Bias-variance analysis of impulse response estimation}
	\label{tbl:1}
	\resizebox{\linewidth}{!}{
	\centerline{
		\begin{tabular}{ccccccc}
			\hline\hline
			& \textbf{TCK} & \textbf{Atom} & \textbf{Atom2} & \textbf{InfA} & \textbf{SS} & \textbf{AdpInfA} \\ \hline 
			\textbf{$\sigma^2=0.1$}&&&&&& \\
			\textbf{$\text{Bias}^2\ [\times 10^{-2}]$} &  6.76 & 23.42 &  6.34 & 2.63 &  8.28 & 0.91\\
			\textbf{$\text{Var}\ [\times 10^{-2}]$}    & 13.04 & 13.59 &  8.52 & 3.80 & 15.68 & 2.70\\ 
			\textbf{$\text{MSE}\ [\times 10^{-2}]$}    & 19.80 & 37.01 & 14.86 & 6.44 & 23.96 & 3.60\\ \hline
			\textbf{$\sigma^2=0.01$}&&&&&& \\
			\textbf{$\text{Bias}^2\ [\times 10^{-2}]$} &  1.78 & 15.92 & 2.22 & 0.43 & 0.47 & 0.07\\
			\textbf{$\text{Var}\ [\times 10^{-2}]$}    &  5.45 & 11.68 & 5.26 & 0.76 & 3.12 & 0.52\\
			\textbf{$\text{MSE}\ [\times 10^{-2}]$}    &  7.23 & 27.60 & 7.48 & 1.18 & 3.59 & 0.59\\ \hline\hline
		\end{tabular}
	}}
\end{table}

Finally, the capability of estimating the poles of the system is demonstrated in Figures~\ref{fig:3} and \ref{fig:4}. It is illustrated in Figure~\ref{fig:3} that all the algorithms that directly solve group lasso problems estimate a much larger number of poles compared to the true one. \textit{AdpInfA} mitigates the over-estimation since the active atomic set shrinks at each iteration, whereas $\text{SS}$ obtains a very accurate estimation of the model order.

\begin{figure}[tb]
\centerline{\includegraphics[width=\columnwidth]{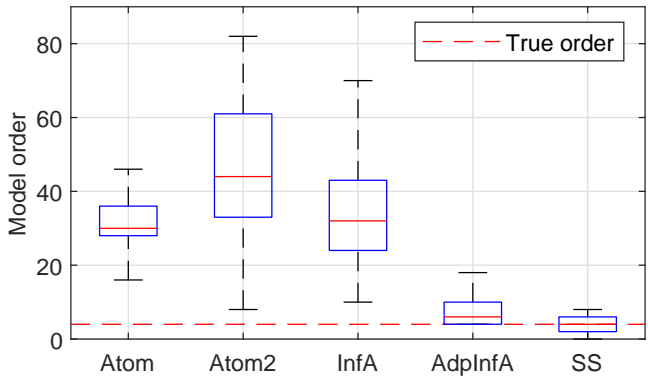}}
\caption{Comparison of estimated model orders for $\sigma^2=0.1$.}
\label{fig:3}
\end{figure}

To assess the accuracy of pole location estimation, Figure~\ref{fig:4} further compares the distributions of estimated pole locations in all 100 Monte Carlo simulations. Despite knowing the true model order, \textit{ARX} fails to give accurate estimations of the pole locations. Although the estimated model order is close to the true one, \textit{AdpInfA} estimates a significant number of false positives in terms of the actual pole locations. Among all the algorithms, only \textit{SS} is able to obtain accurate pole location estimations with few false positives, which proves the effectiveness of the CPSS method.

\begin{figure}[tb]
\centerline{\includegraphics[width=\columnwidth]{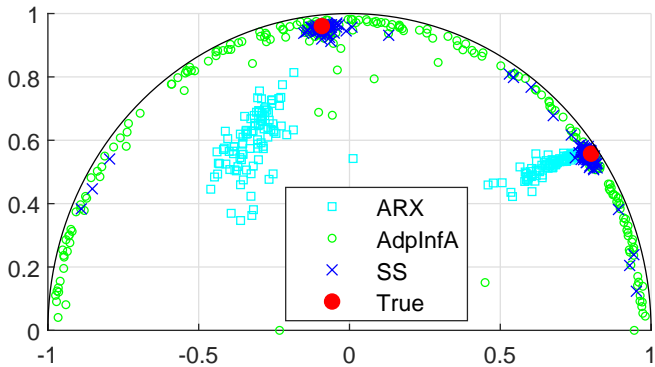}}
\caption{Comparison of pole location estimation distributions in all 100 Monte Carlo simulations for $\sigma^2=0.1$.}
\label{fig:4}
\end{figure}

\section{Conclusions}
This work applies advanced techniques studied in high-dimensional statistics to the atomic norm regularization problem in linear system identification. A greedy algorithm is presented to generate new candidate atomic models from infinitely many possible pole locations. Common drawbacks of lasso-type regularization are mitigated by adaptively adjusting the regularization weights for each atom and selecting only repeatedly occurring pole locations from subsamples of data. Results in this paper suggest that sparse learning algorithms are a promising alternative to kernel-based methods with fewer design requirements and direct pole location estimation. Further research directions include improvements in computational efficiency, comparison with model order reduction methods, and extensions to MIMO systems and frequency-domain data.


\bibliographystyle{IEEEtran}
\bibliography{refs}

\end{document}